\newif  \iflong 
\newtheorem{definition}{Definition}
\newtheorem{theorem}{Theorem}
\newenvironment{proof}{{\sl Proof}:}{\hspace*{\fill}$\Box$}
\def\BibTeX{{\rm B\kern-.05em{\sc i\kern-.025em b}\kern-.08em
    T\kern-.1667em\lower.7ex\hbox{E}\kern-.125emX}}
\begin{document}

\title{Quasi-Cyclic Stern Proof of Knowledge}
\date{}

\iflong
\author{
Loïc Bidoux \inst{1} \and
Philippe Gaborit \inst{2} \and
Mukul Kulkarni \inst{1} \and
Nicolas Sendrier \inst{3}
}

\institute{
Technology Innovation Institute \and
University of Limoges \and 
INRIA \\
}

\else

\author{\IEEEauthorblockN{Loïc Bidoux}
  \IEEEauthorblockA{\textit{Technology Innovation Institute}\\
    UAE \\
    {loic.bidoux@tii.ae}}
  \and
  \IEEEauthorblockN{Philippe Gaborit}
  \IEEEauthorblockA{\textit{University of Limoges}\\
    France \\
    {gaborit@unilim.fr}}
  \and
  \IEEEauthorblockN{Mukul Kulkarni}
  \IEEEauthorblockA{\textit{Technology Innovation Institute}\\
    UAE \\
    {mukul.kulkarni@tii.ae}}
  \and
  \IEEEauthorblockN{Nicolas Sendrier}
  \IEEEauthorblockA{\textit{Inria}\\
    Paris, France \\
    {nicolas.sendrier@inria.fr}}
}
\fi

\maketitle

\begin{abstract}
  The ongoing NIST standardization process has shown that Proof of Knowledge (PoK) based signatures have become an important type of possible post-quantum signatures.
  Regarding code-based cryptography, the main original approach for PoK based signatures is the Stern protocol which allows to prove the knowledge of a small weight vector solving a given instance of the Syndrome Decoding ($\SD$) problem over $\Ft$.
  It features a soundness error equal to $2/3$.
  This protocol was improved a few years later by V\'eron who proposed a variation of the scheme based on the General Syndrome Decoding ($\GSD$) problem which leads to better results in term of communication.
  A few years later, the AGS protocol introduced a variation of the V\'eron protocol based on Quasi-Cyclic (QC) matrices.
  The AGS protocol permits to obtain an asymptotic soundness error of $1/2$ and an improvement in term of communications.

  In the present paper, we introduce the Quasi-Cyclic Stern PoK which constitutes an adaptation of the AGS scheme in a $\SD$ context, as well as several new optimizations for code-based PoK.
  Our main optimization on the size of the signature can't be applied to $\GSD$ based protocols such as AGS and therefore motivated the design of our new protocol.
  In addition, we also provide a special soundness proof that is compatible with the use of the Fiat-Shamir transform for 5-round protocols.
  This approach is valid for our protocol but also for the AGS protocol which was lacking such a proof.
  We compare our results with existing signatures including the recent code-based signatures based on PoK leveraging the MPC in the head paradigm. 
  In practice, our new protocol is as fast as AGS while reducing its associated signature length by $20 \%$.
  As a consequence, it constitutes an interesting trade-off between signature length and execution time for the design of a code-based signature relying only on the difficulty of the $\SD$ problem.
\end{abstract}

\iflong
\else
\vspace{0.5\baselineskip}
\begin{IEEEkeywords}
Code-based Signature, PoK, Stern Protocol
\end{IEEEkeywords}
\fi

\section{Introduction}

Since its introduction in 1978 by McEliece \cite{McEliece78}, code-based cryptography has been one of the main alternative to classical cryptography.
This is illustrated by the ongoing NIST Post-Quantum Cryptography standardization process whose round~3 features three code-based Key Encapsulation Mechanisms (KEM) \cite{ClassicMcEliece, BIKE, HQC}.
Additional KEM \cite{LEDA,ROLLO,RQC} were also considered during the round~2 of the competition. 
Although there exists satisfactory code-based KEM, designing signatures from coding theory has historically been challenging.
Two approaches have been used in the literature namely signatures from the hash-and-sign paradigm and signatures based on identification.
In the first category, a construction was proposed in 2001 \cite{CFS01} although it is rather inefficient.
The recent Wave construction \cite{wave19} provides an efficient solution
following the same paradigm and features small signature sizes.
In the second category, two constructions have been proposed in the past few years namely LESS~\cite{less} and Durandal~\cite{Durandal}.
Hereafter, we focus on signatures constructed from the Fiat-Shamir paradigm \cite{FS, DPJS96, ExtendedFS} along with Zero-Knowledge Proofs of Knowledge (ZK PoK) for the Syndrome Decoding ($\SD$) problem.

The first PoK for the $\SD$ problem was introduced by Stern in 1993 \cite{Stern93}.
In 1997, V\'eron showed that using the general decoding problem ($\GSD$), one can design a protocol that is more efficient than the initial Stern proposal \cite{Veron97}.
The $\SD$ and $\GSD$ problems are equivalent and differ only in the way used to represent the underlying code namely using a parity-check matrix in the former and using a generator matrix in the latter.
Both protocols require 3 rounds to be executed and feature a soundness error equal to 2/3.
In 2011, the CVE~\cite{CVE11} and AGS~\cite{AGS11} PoK respectively improved the Stern and V\'eron protocols by lowering their soundness error to 1/2 (asymptotically close to 1/2 for AGS) using 5 rounds of execution.
The CVE protocol is based on the $\SD$ problem over $\Fq$ while the AGS protocol relies on the $\QCGSD$ problem namely the $\GSD$ problem instantiated with a Quasi-Cyclic~(QC) matrix.
An issue with respect to the zero-knowledge property of $\GSD$ based  protocols (Véron and AGS) has been identified in \cite{ZKissue} and have been fixed in \cite{ISIT21}.
Some of these protocols have also been adapted in the rank metric setting, see \cite{Chen95, RankStern11, RankVeron19} for instance.
Recently, several proposals have used the MPC in the head paradigm in order to achieve a negligible soundness error of $1/N$ for some parameter~$N$.
The GPS \cite{GPS21} construction achieves such a small soundness error by relying on the SD problem over $\Fq$ while the FJR \cite{FJR21} and BGKM \cite{BGKM22} proposals rely on the SD problem over $\Ft$.
However, these constructions induce a performance overhead with respect to previous approaches.

Thanks to these new results, the research problem associated to these protocols has shifted from minimizing the signature size to finding the best trade-off between expected performances and signature size.
Amongst existing constructions, AGS features the smallest expected performance cost while FJR is the best approach to get small signature sizes.
In this paper, we propose a new PoK that has the same cost as AGS while featuring a signature size that is $20 \%$ smaller.
As such, our new protocol provides a new interesting trade-off for the design of signatures based on PoK for the $\SD$ problem.

\vspace{0.5\baselineskip}
\noindent \textbf{Contributions.} We introduce the Quasi-Cyclic Stern protocol which is a new PoK for SD problem as well as several new optimizations for code-based PoK.
Our main optimization on the size of the signature can't be applied to $\GSD$ based protocols such as AGS which motivates the design of our new protocol.
In addition, we also provide a special soundness proof that is compatible with the use of the Fiat-Shamir transform for 5-round protocols which was lacking in the AGS protocol.
In practice, our new protocol is as fast as AGS while reducing its communication length by $20 \%$ therefore providing an interesting trade-off for the design of a code-based signature relying only on the difficulty of the SD problem.

\iflong
\vspace{0.5\baselineskip}
\noindent \textbf{Paper organization.} 
We introduce some preliminaries on code-based cryptography and PoK in Section \ref{sec:preliminaries}.
Then, we describe the Quasi-Cyclic Stern protocol as well as new optimizations for code-based PoK in Section \ref{sec:protocol}.
We give some parameters and depict resulting key sizes and signature sizes in Section \ref{sec:parameters}.
We explain how to generalize our construction to other metrics in Section \ref{sec:generalization}.
\fi

\section{Preliminaries} \label{sec:preliminaries}

\iflong
We start by presenting some definitions related to code-based cryptography in Section \ref{sec:preliminaries:cbc}.
Then, we introduce zero-knowledge proofs of knowledge and explain how one can transform them into signatures in Section \ref{sec:preliminaries:pok}.
Finally, we describe the Stern protocol in Section \ref{sec:preliminaries:stern}.

\vspace{0.5\baselineskip}
\noindent \textbf{Notations.} Hereafter, we represent vectors (respectively matrices) using bold lower-case (respectively upper-case) letters.
\fi
%
We denote by $\hw{\bm{x}}$ the Hamming weight of $\bm{x}$ and by $\hperm$ the symmetric group of all permutations of $n$ elements. 
If $X$ is a finite set, then $x \sampler X$ denotes that $x$ is sampled uniformly at random from $X$ and $x \samples{\psi} X$ denotes that $x$ is sampled uniformly at random from $X$ using the seed $\psi$.

\subsection{Coding Theory and Cryptography} \label{sec:preliminaries:cbc}

\iflong
We start by defining linear codes and quasi-cyclic codes.
Next, we describe the syndrome decoding ($\SD$) and quasi-cyclic syndrome decoding ($\QCSD$) problems which are two difficult problems commonly used in code-based cryptography.
The $\SD$ problem has been proven NP-complete in \cite{BMVT78}.

\begin{definition}[Binary Linear Code]
Let $n$ and $k$ be positive integers such that $k < n$. A binary linear $\mathcal{C}$ code (denoted $[n,k]$) is a $k$-dimensional subspace of $\Ftn$.
  $\mathcal{C}$ can be represented in two equivalent ways: by a generator matrix $\bm{G} \in \Ft^{\ktn}$ such that $\mathcal{C} = \{\bm{mG} ~|~ \bm{m} \in \Ftk\}$ or by a parity-check matrix $\bm{H} \in \Ft^{\nmktn}$ such that $\mathcal{C} = \{ \bm{x} \in \Ftn ~|~ \bm{H}\bm{x}^{\top} = 0\}$.
\end{definition}

\begin{definition}[Systematic Quasi-Cyclic Code]
  A systematic quasi-cyclic code of index $\ell$ and rate $1/\ell$ is a
  $[n=\ell k, k]$ code with an $(\ell-1)k \times \ell k=(n-k)\times n$ parity check matrix of the form:
\begin{equation*}
  \mathbf{H}=
\begin{bmatrix}
  \mathbf{I}_k & 0   &  \cdots &0& \mathbf{A}_0\\
  0   & \mathbf{I}_k &        & & \mathbf{A}_1\\
      &     &  \ddots & & \vdots\\
  0   &     &   \cdots & \mathbf{I}_k     & \mathbf{A}_{\ell-2}
\end{bmatrix}
\end{equation*}
where $\mathbf{A}_0,\ldots ,\mathbf{A}_{\ell-2}$ are circulant $k\times k$ matrices.
\end{definition}

\begin{definition}[$\SD$ problem]
  Given positive integers $n$, $k$, $w$, a random parity-check matrix $\bm{H} \sampler \Ft^{\nmktn}$, and a syndrome $\bm{y} \in \Ft^{\nmk}$ the syndrome decoding problem $\SD(n,k,w)$ asks to find $\bm{x} \in \Ftn$, such that $\bm{Hx}^\top = \bm{y}^\top$ and $\hw{\bm{x}} = w$.
\end{definition}

\begin{definition}[$\QCSD$ problem]
  Given positive integers $n, k, w$, with $n=\ell k$ for some $\ell$, a random parity-check matrix of a quasi-cyclic code $\bm{H} \sampler \mathcal{QC}(\Ft^{(n - k) \times n})$, and a syndrome $\bm{y} \in \Ft^{(n-k)}$, the syndrome decoding problem $\QCSD(n,k,w)$ asks to find $\bm{x} \in \Ft^{n}$, such that $\bm{Hx}^\top = \bm{y}^\top$ and $\hw{\bm{x}} = w$.
\end{definition}
The quasi-cyclic codes we will consider hereafter will
always have index $\ell=2$.

\else

Let $n$ and $k$ be positive integers such that $k < n$. A binary linear $\mathcal{C}$ code is a $k$-dimensional subspace of $\mathbb{F}_2^n$.
$\mathcal{C}$ can be represented by a parity-check matrix $\bm{H} \in \Ft^{\nmktn}$ such that $\mathcal{C} = \{ \bm{x} \in \Ftn ~|~ \bm{H}\bm{x}^\top = 0\}$.
A QC code of index~2 is a code whose parity-check matrix $\bm{H}$ is the concatenation of two $k \times k$ circulant matrices which is denoted by $\bm{H} \in \mathcal{QC}(\Ft^{\kttk})$.
Given a parity-check matrix $\bm{H} \in \Ft^{\nmktn}$ (respectively $\bm{H} \in \mathcal{QC}(\Ft^{\kttk})$) and a syndrome $\bm{y}^{\top} = \bm{H}\bm{x}^{\top}$ of a vector of small weight $\hw{\bm{x}} = \omega$, the $\SD$ (respectively $\QCSD$) problem asks to find $\bm{x}$.
\fi

\iflong
\subsection{Signatures from Zero-Knowledge Proof of Knowledge} \label{sec:preliminaries:pok}

We start by introducing commitment schemes as they are a building block used to construct proofs of knowledge.
We require such schemes to be computationally hiding and computationally biding. 
The former property ensures that efficient adversaries can't distinguish between two commitments generated from different messages. 
The latter property ensures that efficient adversaries can't change their committed messages after the commit step.

\else
\subsection{Proof of Knowledge and Commitment Schemes} \label{sec:com-pok}

A commitment scheme $\COM = (\mathsf{Commit}, \mathsf{Open})$  allows a sender to
produce a commitment $c$ to message $m$ of their choice.  $\COM$ is said
to be\emph{ hiding} if $c$ does not reveal any information about $m$.
The sender can convince any receiver that $m$ is the underlying message
used to generate $c$ using the $\mathsf{Open}$ algorithm. The \emph{binding}
property of $\COM$ guarantees that  a cheating sender cannot produce
valid opening for any message except $m$ after sending $c$ to the receiver.
In this paper, we instantiate the commitment scheme using
collapse-binding hash functions with appropriate salt values
and the opening information simply reveals the salt.
\fi

\iflong
\begin{definition}[Commitment Scheme]
  A commitment scheme is a tuple of algorithms $\COM = (\mathsf{Commit}, \mathsf{Open})$ such that $\mathsf{Commit}(m)$ returns a commitment $c$ for the message $m$ and $\mathsf{Open}(c, m)$ returns either $1$ ($\mathsf{accept}$) or $0$ ($\mathsf{reject}$). Note that, in general commitment schemes
  also take some randomness $r$ as input to the $\mathsf{Commit}$ and
  $\mathsf{Open}$ algorithms, we assume that this is implicit in the 
  subsequent definitions and discussion.
A commitment scheme is said to be correct if:
 \begin{equation*}
 \prb
  \left[ b = 1 \ \middle\vert \begin{array}{l}
    c \leftarrow \commit{m}, ~ b \leftarrow \open{c,m}
  \end{array}\right]
  =1.
  \end{equation*}
\end{definition}

\begin{definition}[Computationally Hiding]
Let $(m_0, m_1)$ be a pair of messages, the advantage of an adversary $\adv$ against the commitment hiding experiment be defined as:
\begin{equation*}
  \mathsf{Adv}^{\mathsf{hiding}}_{\adv}(1^\lambda) = 
  \Bigg| \, \prb \left[ \begin{array}{l}
    b = b' \\
  \end{array} \ \middle\vert~ \\ \begin{array}{l}
    b \sampler \{0, 1\}, ~ c \samplen \commit{m_b} \\
    b' \samplen \adv.\mathsf{guess}(c)
  \end{array}\right]
  - \frac{1}{2} \, \Bigg|.
\end{equation*}
A commitment scheme $\COM$ is computationally hiding if for all probabilistic polynomial time adversaries $\adv$ and every pair of messages $(m_0, m_1)$, $\mathsf{Adv}^{\mathsf{hiding}}_{\adv}(1^\lambda)$ is negligible in $\lambda$.
\end{definition}

\begin{definition}[Computationally Binding]
Let the advantage of an adversary $\adv$ against the commitment binding experiment be defined as:
\begin{equation*}
  \mathsf{Adv}^{\mathsf{binding}}_{\adv}(1^\lambda) = \prb
  \left[ \begin{array}{l}
    m_0 \neq m_1 \\
    1 \samplen \open{c, m_0} \\ 
    1 \samplen \open{c, m_1} \\ 
  \end{array} \ \middle\vert~ \\ \begin{array}{l}
    (c, m_0, m_1) \samplen \adv.\mathsf{choose}(1^{\lambda})
  \end{array}\right].
\end{equation*}
A commitment scheme $\COM$ is computationally binding if for all probabilistic polynomial time adversaries $\adv$, $\mathsf{Adv}^{\mathsf{binding}}_{\adv}(1^\lambda)$ is negligible in $\lambda$.
\end{definition}

We now describe proofs of knowledge along with their soundness and zero-knowledge properties. 
Informally, a zero-knowledge proof of knowledge allows a prover $\prover$ to prove the knowledge of a secret to a verifier $\verifier$ without revealing anything about it.
\fi
%
An interactive proof system is a protocol between two parties
($\prover$ and $\verifier$) used to establish the validity of some statement
$x$ by proving the existence of a witness $w$ such that $R(x,w) = 1$
for some public relation $R$ (\textit{i.e.} $x \in L$ for some language
$L \in \mathsf{NP}$). A PoK system
additionally proves that $\prover$ actually \emph{knows} a valid
witness $w$ (as opposed to its existence earlier). A PoK system is  
(1) \emph{complete} if proof corresponding to a valid statement ($x \in L)$
is always accepted by the honest verifier, 
(2) \emph{sound} if malicious prover cannot prove a false statement, 
(3) \emph{special sound} if repeated interaction with malicious prover (with a fixed  false statement) allows efficient recovery of a valid witness, and
(4) \emph{zero-knowledge} if the verifier does not learn any information
about the witness $w$ (beyond its existence) after interacting with the prover.

\iflong


\begin{definition}[Proof of Knowledge]
  A $(2n+1)$-rounds proof of knowledge $\mathsf{PoK} = (\setup, \allowbreak \keygen, \allowbreak \prover_1, \allowbreak \cdots, \allowbreak \prover_{n+1}, \allowbreak \verifier_1, \allowbreak \cdots, \allowbreak \verifier_{n+1})$ is an interactive protocol between a prover $\prover$ and a verifier $\verifier$.
We denote by $\big \langle \prover(\param, \allowbreak \sk, \allowbreak \pk), \allowbreak \verifier(\param, \allowbreak \pk) \big \rangle$ the transcript of a proof of knowledge between a prover $\prover$ and a verifier $\verifier$. 

\vspace{0.5\baselineskip}
\noindent A proof of knowledge is correct if for all $\param$ returned by $\setup(1^\lambda)$ and all $(\sk, \pk)$ returned by $\keygen(\param)$:
\begin{equation*}
  \prb
  \left[ \begin{array}{l}
    \accept \leftarrow \big \langle \prover(\param, \sk, \pk), \\ \verifier(\param, \pk) \big \rangle
  \end{array} \ \middle\vert \\ \begin{array}{l}
    \param \leftarrow \setup(1^\lambda) \\
    (\sk, \pk) \leftarrow \keygen(\param)
  \end{array}\right]
  = 1.
\end{equation*}
\end{definition}

\begin{definition}[Soundness] A proof of knowledge is sound with soundness error $\epsilon$ if for all $\param$ returned by $\setup(1^\lambda)$, all $(\sk, \pk)$ returned by $\keygen(\param)$ and all probabilistic polynomial time $\ppt$ malicious prover $\tilde{\prover}$:
\begin{equation*}
 \prb
  \left[ \begin{array}{l}
    \accept \leftarrow \big \langle \tilde{\prover}(\param, \pk), \\ \verifier(\param, \pk) \big \rangle
  \end{array} \ \middle\vert \begin{array}{l}
    \param \leftarrow \setup(1^\lambda) \\
    (\sk, \pk) \leftarrow \keygen(\param) 
  \end{array}\right]
  \leq \epsilon + \negl(\lambda).
\end{equation*}
\end{definition}

\begin{definition}[Honest-Verifier Zero-Knowledge]
  A proof of knowledge satisfies the honest-verifier zero-knowledge (HZVK)  property if there exists a $\ppt$ simulator $\simulator$ that given as input $(\param, \pk)$, outputs a transcript $\big \langle \simulator(\allowbreak \param, \allowbreak \pk), \allowbreak \verifier(\param, \allowbreak \pk) \big \rangle$ that is computationally indistinguishable from the probability distribution of transcripts of honest executions between a prover $\prover$ with input $(\param, \sk, \pk)$ and verifier $\verifier$ with input $(\param, \pk)$.
\end{definition}

Here, we will assume that the $(2n + 1)$ round PoK protocol starts with message $a$ from the prover $\prover$ and it is followed by
subsequent challenge and response messages. The challenge message $c_i$ is the challenge sent by the verifier $\verifier$ in $i$-th round
and $z_i$ is the response to $c_i$ sent by $\prover$.
We also call the tuple $(a, c_1, z_1, \ldots, c_n, z_n)$ as \emph{transcript} of the protocol. The transcript is called
\emph{accepting} if $\verifier$ accepts the proof. \footnote{We also assume that the protocols discussed in this work are public-coin, 
that is the randomness used by $\verifier$ to compute challenges is public.}

\begin{definition} [Tree of Transcripts~\cite{EPRINT:AttFehKlo21}] \label{def:tree-of-transcripts}
  Let $k_1, k_2, \ldots, k_n \in \mathbb{N}$. A $(k_1, k_2, \ldots, k_n)$-tree of transcripts for a $(2n+1)$-move public coin protocol
  $\mathsf{PoK}$ is a set of $K := \prod_{i = 1}^{n} k_i$ transcripts arranged in the following tree structure. The nodes in the tree
  represent the prover's messages and the edges between the nodes correspond to the challenges sent by the verifier. Each node at depth
  $i$ has exactly $k_i$ children corresponding to the $k_i$ pairwise distinct challenges. Every transcript is represented by exactly
  one path from the root of the tree to a leaf node. 
\end{definition}

For brevity, we also write $\mathbf{k} = (k_1, k_2, \ldots, k_n) \in {\mathbb{N}}^n$ and refer to the tree of transcripts above
as $\mathbf{k}$-tree of transcripts.


\begin{definition} [$(k_1, k_2, \ldots, k_n)$-out-of-$(N_1, N_2, \ldots, N_n)$ Special-Soundness~\cite{EPRINT:AttFehKlo21}] \label{def:k-soundness}

  Let $k_1, k_2, \ldots, k_n, N_1, N_2, \ldots, N_n \in \mathbb{N}$. A $(2n + 1)$ move public-coin $\mathsf{PoK}$ protocol,
  where $\verifier$ samples the $i$-th challenge from a set of cardinality $N_i \geq k_i$ for $i \in [n]$, is
  $(k_1, k_2, \ldots, k_n)$-out-of-$(N_1, N_2, \ldots, N_n)$ special-sound if there exists a polynomial time algorithm that
  on an input statement $x$ and a $(k_1, k_2, \ldots, k_n)$-tree of accepting transcripts outputs a witness $w$. We also say
  $\mathsf{PoK}$ is $(k_1, k_2, \ldots, k_n)$-special-sound.

\end{definition}

Fiat and Shamir~\cite{FS} famously showed how to convert an interactive 
zero-knowledge PoK system into a non-interactive (here after NIZK-PoK) 
one in the random oracle model (ROM). 
The key observation was, if the verifier's challenges are 
sampled uniformly at random within their associated challenge spaces
(this is also called public-coin),
then one can emulate the verifier's behavior using a hash function modeled as a random oracle.
Thus, one can transform an interactive proof of knowledge into a (non-interactive) signature.

\begin{theorem}[Theorem 3,~\cite{EPRINT:AttFehKlo21} (Informal, rephrased)] \label{thm:FS-PoK}
	Fiat-Shamir transform of $(k_1, k_2, \ldots, k_n)$-out-of-$(N_1, N_2, \ldots, N_n)$
	special-sound interactive proof system is knowledge sound.
\end{theorem}

This means that special-sound, HVZK interactive proof systems can be
transformed into signature schemes using Fiat-Shamir transform.

In~\cite{FOCS:AmbRosUnr14,EC:Unruh16} it was shown that
the classical definition of commitment scheme (presented above)
 does not provide security (w.r.t. binding) in the presence of
  efficient quantum adversaries.
\footnote{The binding property guarantees that an efficient classical adversary
cannot produce \emph{two valid} openings for a given commitment such that
the openings can be used to open two distinct messages (say $m \neq m'$).
However, the situation is not so straightforward in the quantum setting. 
In~\cite{FOCS:AmbRosUnr14,EC:Unruh16} it was shown that it is possible
to construct an quantum adversary which produces 
a quantum state as commitment which is in superposition. And later
produce a \emph{single valid opening} for a message of its choice.
Clearly this is insecure since this quantum adversary can open the given
commitment to any message, but this is not captured by the classical binding
property because the attacker only produced a single valid opening.}
Unruh~\cite{EC:Unruh16} presented the suitable definition of
commitment scheme which is secure even when malicious sender possesses
efficient quantum computing capabilities, such post-quantum secure
commitment schemes are called \emph{collapse-binding commitments}.

We refer the interested readers
to~\cite{EC:Unruh16,AC:Unruh17,C:DFMS19,C:DonFehMaj20}
for an overview of relevant results  and 
the definitions of collapse-binding commitments,
quantum random oracle, collapse-binding hash functions 
(quantum analogue of collision-resistance)
which are crucial to the post-quantum security of our scheme.

In this paper, we instantiate the commitment scheme using
collapse-binding hash functions with appropriate salt values (randomness)
and the opening information simply reveals the salt.

\subsection{Stern Protocol} \label{sec:preliminaries:stern}

In this section, we present the initial Stern protocol \cite{Stern93} and discuss informally its correctness, soundness and zero-knowledge properties.
Given a public key $\pk = (\bm{H}, \bm{y}^\top)$, the Stern protocol permits to prove the knowledge of $\sk = (\bm{x})$ such that $\bm{y}^\top = \bm{H} \bm{x}^\top$ and $\hw{\bm{x}} = w$.
As such, the Stern protocol constitutes a proof of knowledge of a solution to an $\SD$ problem instance.
The idea of the protocol is to prove the knowledge of $\bm{x}$ using $\bm{x} + \bm{u}$ for some random value $\bm{u}$ and to prove that $\hw{\bm{x}} = w$ using $\pi[\bm{x}]$ for some random permutation $\pi$.

The protocol (see Figure \ref{fig:sd-plain1}) is a 3-rounds construction in which the prover $\prover$ starts by generating three commitments $c_1, c_2, c_3$ related to $\pi, \bm{x}$ and $\bm{u}$.
Then, the verifier $\verifier$ computes a challenge $\Ch \sampler \{0, 1, 2\}$ and the prover $\prover$ outputs a response $\Rsp$.
During the verification step, the verifier $\verifier$ recomputes two out of three commitments (depending on the value of the challenge $\Ch$) using $\Rsp$ and either $\accept$ or $\reject$ the proof.
The correctness of the proof follows straightforwardly from its description.

Regarding the soundness of the proof, one can see that a malicious prover $\tilde{P}$ can always cheat in two out of three cases.
Indeed, by choosing $\bm{z}$ such that $\bm{H} \bm{z} = \bm{H} \bm{u} + \bm{y}$ (no constraint on the weight of $\bm{z}$) and computing $c_3 = \cmt{\pi[\bm{z}]}$ and $\Rsp = (\pi, \bm{z})$ in the case $\Ch = 1$, the malicious prover $\tilde{\prover}$ will be accepted for both $\Ch = 0$ and $\Ch = 1$.
Similarly, by choosing $\tilde{\bm{x}}$ such that $\hw{\tilde{\bm{x}}} = w$ and computing $c_3 = \cmt{\pi[\tilde{\bm{x}} + \bm{u}]}$ and $\Rsp = (\pi[\tilde{\bm{x}}], \pi[\bm{u}])$ in the case $\Ch = 2$, the malicious prover $\tilde{\prover}$ will be accepted for both $\Ch = 0$ and $\Ch = 2$.
Finally, by choosing $\tilde{\bm{x}}$ such that $\hw{\tilde{\bm{x}}} = w$ and computing $c_1 = \cmt{\bm{H}(\tilde{\bm{x}} + \bm{u}) - \bm{y}}$ and $c_3 = \cmt{\pi[\tilde{\bm{x}} + \bm{u}]}$ as well as $\Rsp = (\pi, \tilde{\bm{x}} + \bm{u})$ in the case $\Ch = 1$ and $\Rsp = (\pi[\tilde{\bm{x}}], \pi[\bm{u}])$ in the case $\Ch = 2$, the malicious prover $\tilde{\prover}$ will be accepted for both $\Ch = 1$ and $\Ch = 2$.
In addition, if a malicious prover is able to cheat in the three cases, then one can build a knowledge extractor solving an instance of the $\SD$ problem from its response.
This explains why this protocol has a soundness error equal to $2/3$. 
In practice, one need to execute the protocol $\delta$ times to achieve a negligible soundness error.
This is not depicted in Figures~\ref{fig:sd-plain1} and \ref{fig:sd-structured1} which serve an illustrative purpose nonetheless these $\delta$ iterations are depicted in Figure \ref{fig:sd-structured2} which is used to describe a complete instantiation of our new construction.

Regarding the zero-knowledge property of the proof, one can see that if $\COM$ is a hiding commitment scheme, then $\Cmt$ does not reveal anything on the secret $\bm{x}$.
Moreover, the response of the prover $\prover$ only uses the secret through $\bm{u} + \bm{x}$ or $\pi[\bm{x}]$ namely $\bm{x}$ is either masked by a random vector $\bm{u}$ or random permutation $\pi$.
Formally, one can build a simulator that generates the view of an honest verifier with access to the public key only.

\begin{figure}[!htb]
  \begin{center}
  \resizebox{0.44\textwidth}{!}{\fbox{
    \pseudocode{%
      \hspace{0pt} \\[-0.75\baselineskip] 
      \underline{\setup(1^\lambda) ~\&~ \keygen(\param)} \\
      \param = (n, k, w) \samplen \setup(1^{\lambda}) \\
      \bm{x} \sampler \Ftn \text{ such that } \hw{\bm{x}} = w \\
      \bm{H} \sampler \Ft^{\nmktn}, ~ \bm{y}^{\top} = \bm{H} \bm{x}^{\top} \\
      (\sk, \pk) = (\bm{x}, (\bm{H}, \bm{y})) \\[0.75\baselineskip]
      \underline{\prover_1(\param, \sk, \pk)} \\
      \pi \sampler \hperm, ~ \bm{u} \sampler \Ftn \\
      c_{1} = \cmt{\pi \, || \, \bm{H} \bm{u}^{\top}} \\
      c_{2} = \cmt{\pi[\bm{u}]} \\
      c_{3} = \cmt{\pi[\bm{u} + \bm{x}]} \\
      \Cmt = (c_{1}, c_{2}, c_{3}) \\[0.75\baselineskip]
      \underline{\verifier_1(\param, \pk, \Cmt)} \\
      \Ch \sampler \{0, 1, 2\} \\[0.75\baselineskip]
      \underline{\prover_2(\param, \sk, \pk, \Cmt, \Ch)} \\
      \pcif \Ch = 0 \pcthen \\
      \pcind \Rsp = \big( \pi, \, \bm{u} \big) \\
      \pcif \Ch = 1 \pcthen \\
      \pcind \Rsp = \big( \pi, \, \bm{u} + \bm{x} \big) \\
      \pcif \Ch = 2 \pcthen \\
      \pcind \Rsp = \big( \pi[\bm{u}], \, \pi[\bm{x}] \big) \\[0.75\baselineskip]
      \underline{\verifier_2(\param, \pk, \Cmt, \Ch, \Rsp)} \\
      \pcif \Ch = 0 \pcthen \\
      \pcind \bar{c}_{1} = \cmt{\pi \, || \, \bm{H} \bm{u}^{\top}} \\
      \pcind \bar{c}_{2} = \cmt{\pi[\bm{u}]} \\[0.5\baselineskip]
      \pcind \pcif c_1 \neq \bar{c}_1 \pcor c_2 \neq \bar{c}_{2} \pcthen \\
      \pcind \pcind \pcreturn \reject \\[0.75\baselineskip]
      \pcif \Ch = 1 \pcthen \\
      \pcind \bar{c}_{1} = \cmt{\pi \, || \, \bm{H}(\bm{u} + \bm{x})^{\top} - \bm{y}^{\top}}  \\
      \pcind \bar{c}_{3} = \cmt{\pi[\bm{u} + \bm{x}]} \\[0.5\baselineskip]
      \pcind \pcif c_1 \neq \bar{c}_1 \pcor c_3 \neq \bar{c}_{3} \pcthen \\
      \pcind \pcind \pcreturn \reject \\[0.75\baselineskip]
      \pcif \Ch = 2 \pcthen \\
      \pcind \bar{c}_{2} = \cmt{\pi[\bm{u}]}  \\
      \pcind \bar{c}_{3} = \cmt{\pi[\bm{u}] + \pi[\bm{x}]} \\[0.5\baselineskip]
      \pcind \pcif c_2 \neq \bar{c}_2 \pcor c_3 \neq \bar{c}_{3} \pcor \hw{\pi[\bm{x}]} \neq w \pcthen \\
      \pcind \pcind \pcreturn \reject \\[0.75\baselineskip]
      \pcreturn \accept
    }}}
  \caption{Stern Protocol} \label{fig:sd-plain1} 
  \end{center}
\end{figure}

\fi

\section{Quasi-Cyclic Stern Proof of Knowledge} \label{sec:protocol}
Our new PoK is based on the Stern protocol along with quasi-cyclicity and shares similarities with AGS \cite{AGS11}.
While one may think that Véron based protocols such as AGS would inherently be more efficient than Stern based ones, we prove this belief to be erroneous.
We introduce the non optimized QC Stern protocol in Section \ref{sec:protocol:simple}.
Then, we recall some optimizations from the literature and present new ones in Section \ref{sec:optimization}.
Finally, we describe the optimized QC Stern protocol and discuss its security in Section \ref{sec:protocol:optimized}.

\subsection{Quasi-Cyclic Stern Protocol} \label{sec:protocol:simple}
Our new protocol (see Figure~\ref{fig:sd-structured1}) is a ZK PoK for the Quasi-Cyclic Syndrome Decoding ($\QCSD$) problem.
Given inputs $(\bm{H}, \bm{y}) \sampler \mathcal{QC}(\Ft^{\kttk}) \times \Ft^{k}$, it allows a prover to convince a verifier that he knows $\bm{x} \in \Ft^{2k}$ such that $\bm{H}\bm{x}^{\top} = \bm{y}^{\top}$ and $\hw{\bm{x}} = w$ without revealing anything on it.
Let $\bm{a} = (a_0, \cdots, a_{k - 1}) \in \Ftk$, we define the $\rot{}{}$ operator as $\rot{\bm{a}}{r} := (a_{k - r + 1}, \cdots, a_{k - r})$.
For $\bm{b} = (\bm{b}_1, \bm{b}_2)\in \Ft^{2k}$, we slightly abuse notations and define $\rot{\bm{b}}{r} := (\rot{\bm{b}_1}{r}, \rot{\bm{b}_2}{r})$.
As we are considering QC matrices, one can see that $\bm{y}^\top = \bm{H} \bm{x}^\top \Leftrightarrow \rot{\bm{y}}{r}^\top = \bm{H} \, \rot{\bm{x}}{r}^\top$
hence one can prove the knowledge of the secret $\bm{x}$ associated to the public value $\bm{y}$ using any of $k$ different equations.
This property allows to introduce a new kind of challenge that permits to reduce the soundness error to $1/2$ asymptotically.

\iflong
\newcommand\boxresize{0.53}
\newcommand\hackscale{0pt}
\else
\newcommand\boxresize{0.48}
\newcommand\hackscale{250pt}
\fi

\begin{figure}[!htb]
  \begin{center}
  \resizebox{\boxresize\textwidth}{!}{\fbox{
    \pseudocode{%
      \hspace{\hackscale} \\[-0.75\baselineskip] 
      \underline{\setup(1^\lambda) ~\&~ \keygen(\param)} \\
      \param = (k, w) \samplen \setup(1^{\lambda}) \\
      \bm{x} \sampler \Ft^{2k} \text{ such that } \hw{\bm{x}} = w \\
      \bm{H} \sampler \mathcal{QC}(\Ft^{\kttk}), ~ \bm{y}^{\top} = \bm{H} \bm{x}^{\top} \\
      (\sk, \pk) = (\bm{x}, (\bm{H}, \bm{y})) \\[0.5\baselineskip]
      \underline{\prover_1(\param, \sk, \pk)} \\
      \pi \sampler \hpermtk, ~ \bm{u} \sampler \Ft^{2k} \\
      c_{1} = \cmt{\pi \, || \, \bm{H} \bm{u}^{\top}}, ~ c_{2} = \cmt{\pi[\bm{u}]} \\
      \Cmt_1 = (c_{1}, c_{2}) \\[0.5\baselineskip]
      \underline{\verifier_1(\param, \pk, \Cmt_1)} \\
      r \sampler [0, k - 1] \\
      \Ch_1 = r \\[0.5\baselineskip]
      \underline{\prover_2(\param, \sk, \pk, \Cmt_1, \Ch_1)} \\
      \bm{x}_{r} = \rot{\bm{x}}{r}, ~ c_{3} = \cmt{\pi[\bm{u} + \bm{x}_{r}]} \\
      \Cmt_2 = c_{3} \\[0.5\baselineskip]
      \underline{\verifier_2(\param, \pk, \Cmt_1, \Ch_1, \Cmt_2)} \\
      \Ch_2 \sampler \bit  \\[0.5\baselineskip]
      \underline{\prover_3(\param, \sk, \pk, \Cmt_1, \Ch_1, \Cmt_2, \Ch_2)} \\
      \pcif \Ch_2 = 0 \pcthen \\
      \pcind \Rsp = \big( \pi, \, \bm{u} + \bm{x}_{r} \big) \\
      \pcif \Ch_2 = 1 \pcthen \\
      \pcind \Rsp = \big( \pi[\bm{u}], \, \pi[\bm{x}_{r}] \big) \\[0.5\baselineskip]
      \underline{\verifier_3(\param, \pk, \Cmt_1, \Ch_1, \Cmt_2, \Ch_2, \Rsp)} \\
      \pcif \Ch_2 = 0 \pcthen \\
      \pcind \bar{c}_{1} = \cmt{\pi \, || \, \bm{H}(\bm{u} + \bm{x}_{r})^{\top} - \rot{\bm{y}}{r}^{\top}}  \\
      \pcind \bar{c}_{3} = \cmt{\pi[\bm{u} + \bm{x}_{r}]} \\
      \pcind \pcif c_1 \neq \bar{c}_1 \pcor c_3 \neq \bar{c}_{3} \pcthen \\
      \pcind \pcind \pcreturn \reject \\[0.5\baselineskip]
      \pcif \Ch_2 = 1 \pcthen \\
      \pcind \bar{c}_{2} = \cmt{\pi[\bm{u}]}  \\
      \pcind \bar{c}_{3} = \cmt{\pi[\bm{u}] + \pi[\bm{x}_{r}]} \\
      \pcind \pcif c_2 \neq \bar{c}_2 \pcor c_3 \neq \bar{c}_{3} \pcor \hw{\pi[\bm{x}_{r}]} \neq w \pcthen \\
      \pcind \pcind \pcreturn \reject \\
      \pcreturn \accept
    }}}
  \caption{Quasi-Cyclic Stern Protocol (one iteration)} \label{fig:sd-structured1}
  \vspace{-\baselineskip}
  \end{center}
\end{figure}

\subsection{Existing Improvements and New Optimizations} \label{sec:optimization}

\vspace{0.5\baselineskip}
\noindent \textbf{Reducing the number of commitments} \cite{AGS11}\textbf{.}
We recall that code-based proofs of knowledge generally feature a soundness error of $2/3$ or $1/2 $ therefore one need to perform $\delta$ iterations of these protocols to get a negligible soundness error.
Using Figure~\ref{fig:sd-structured1} as an illustrative example, the idea is to compress the commitments over all the iterations so that only two initial commitments $\Cmt_1 = (c_{1,1} || c_{1,2} || \cdots || c_{\delta,1} || c_{\delta,2})$ and $\Cmt_2 = (c_{1,3} || \cdots || c_{\delta,3})$ need to be sent.
As the verifier is only able to reconstruct 2 out of 3 commitments himself, the prover must send him the missing commitment at the end of each iteration.
Overall, this reduces the number of commitments to be sent from $3\delta$ to $2 + \delta$. 

\vspace{0.5\baselineskip}
\noindent \textbf{Small weight vector compression} \cite{AGS11}\textbf{.}
The prover must reveal a permutation of a small weight vector $\pi[\bm{x}_r]$ to answer some challenges.
Leveraging the small weight of $\bm{x}_r$, one can compress $\pi[\bm{x}_r]$ before sending it thus reducing the cost of sending small weight vectors from $n$ to approximately $n/2$.

\vspace{0.5\baselineskip}
\noindent \textbf{Mitigation of an attack against 5-rounds protocols} \cite{ISIT21}\textbf{.} 
The attack against 5-rounds PoK from \cite{KZ20} is relevant for our construction.
The key idea of this attack is to split the attacker work in two steps by first trying to guess the first challenge for several repetitions and then guess the second challenge for the remaining repetitions.
One has to increase the number of iterations $\delta$ of the underlying PoK to ensure that the resulting signature remains secure, which increases its size.
One way to mitigate this attack is to consider $s$ instances of the $\SD$ problem within the keypair namely using $\sk = (\bm{x}^i)_{i \in [1,s]}$ and $\pk = (\bm{H}, (\bm{y}^i)^\top = \bm{H}(\bm{x}^i)^\top)_{i \in [1,s]}$.
Doing so, the first challenge space size is increased from $k$ to $s \times k$ which makes the attack less efficient so that $\delta$ won't need to be increased as much as initially thought.
In practice, this introduces a trade-off between key size and signature size.

\vspace{0.5\baselineskip}
\noindent \textbf{Additional vector compression from seeds.}
Starting from the initial Stern proposal, all constructions suggest to use seeds to reduce communication costs.
Using Figure \ref{fig:sd-structured1} for illustrative purposes, one can use a seed $\theta$ to compute $\pi$ and then substitute $\pi$ by $\theta$ in the prover's response.
We now introduce an additional vector compression that is only applicable to $\SD$ based protocols.
One can go one step further and use a seed $\xi$ to generate a random value $\bm{v} \samples{\xi} \Ftn$ and then compute the value $\bm{u} = \pi^{-1}[\bm{v}]$.
When the prover is required to send $\pi[\bm{u}]$, he now needs to send $\bm{v}$ which can be substituted by $\xi$.
Instead of sending a vector of size $n$, the prover only sends a seed which greatly reduces the communication cost.
We now explain why this optimization can't be applied to the AGS protocol.
Under the $\GSD$ representation, the prover need to send $\pi[\bm{uG}]$ rather than $\pi[\bm{u}]$ whenever $\Ch_2 = 1$.
But the quantity $\pi[\bm{uG}]$ cannot be replaced by a seed
generating it as $\bm{uG}$ is a codeword hence the optimization can not be applied.

\vspace{0.5\baselineskip}
\noindent \textbf{Seed and commitment compression.} 
Using the previous optimization, one can see that the prover sends one seed during each iteration either $\theta$ from which $\pi$ can be recomputed or $\xi$ from which $\pi[\bm{u}]$ can be recomputed.
Let us consider two consecutive iterations of the protocol, the prover will have to send one of the following tuple of seeds: $(\theta_1, \theta_2), (\theta_1, \xi_2), (\xi_1, \theta_2), (\xi_1, \xi_2)$.
If master seeds $\theta$ (respectively $\xi$) are used to generate $\theta_1$ and $\theta_2$ (respectively $\xi_1$ and $\xi_2$), then the prover will have to send one of the following values: $\theta, (\theta_1, \xi_2), (\xi_1, \theta_2), \xi$.
By using such a technique, one reduces the average communication cost associated to seeds by $25$\%.
This optimization can be seen as a variation of the seed compression optimization from \cite{KKW18} in which the (unique) binary tree used is replaced by several binary trees of depth~1.
Similarly, one can also group commitments using binary trees of depth~1 (from bottom to top contrarily to the previous case) which reduces the cost associated to commitments from $(2 + \delta) \cdot |\com|$ to $(2 + 0.75\delta) \cdot |\com|$.

\subsection{Optimized Quasi-Cyclic Stern Protocol} \label{sec:protocol:optimized}

Figure~\ref{fig:sd-structured2} describes the optimized version of QC Stern protocol and includes the $\delta$ iterations required to reduce the soundness bellow $2^{-\lambda}$ where $\lambda$ is the security parameter.
As it is usually done, we don't include binary tree related optimizations nor small weight vector related optimizations as this simplifies the description of the protocol while not being related to its security.
We discuss the soundness and zero-knowledge properties of our PoK giving only sketches of proof and defer the reader to the full version of the paper for additional details \cite{ISIT22-long}.
The soundness relies on a reduction from the $\QCSD$ problem to the $\DSD$ problem. 

\iflong
\newcommand\boxresizeb{0.48}
\newcommand\hackscaleb{0pt}
\else
\newcommand\boxresizeb{0.48}
\newcommand\hackscaleb{250pt}
\fi

\begin{figure}[!htb]
  \begin{center}
  \resizebox{\boxresizeb\textwidth}{!}{\fbox{
    \pseudocode{%
      \hspace{\hackscaleb} \\[-0.75\baselineskip] 
      \underline{\setup(1^\lambda) ~\&~ \keygen(\param)} \\
      \param = (k, w, \delta, s, |\seed|) \samplen \setup(1^{\lambda}) \\
      \phi_1 \sampler \bit^{|\seed|}, \phi_2 \sampler \bit^{|\seed|}, 
      \bm{H} \samples{\phi_2} \mathcal{QC}(\Ft^{\kttk}) \\
      \pcfor i \in \intoneto{s} \pcdo ~\{~ \bm{x}^i \samples{\phi_1} \Ft^{2k}, ~ (\bm{y}^i)^{\top} = \bm{H} (\bm{x}^i)^{\top} ~\} \\
      (\sk, \pk) = (\phi_1, (\phi_2, \bm{y}^1, \cdots, \bm{y}^s)) \\[0.5\baselineskip]
      \underline{\prover_1(\param, \sk, \pk)} \\
      \pcfor i \in \intoneto{\delta} \pcdo \\
      \pcind \theta_i \sampler \bit^{|\seed|}, ~ \pi_i
      \samples{\theta_i} \hpermtk \\
      \pcind \xi_i \sampler \bit^{|\seed|}, ~ \bm{v}_i \samples{\xi_i} \Ft^{2k}, ~ \bm{u}_i = \pi^{-1}[\bm{v}_i]\\
      \pcind c_{i,1} = \cmt{\pi_i \, || \, \bm{H} \bm{u}_i^{\top}}, ~ c_{i,2} = \cmt{\pi_i[\bm{u}_i]} \\
      \Cmt_1 = \cmt{c_{1,1} || c_{1,2} || \cdots || c_{\delta,1} || c_{\delta,2}} \\[0.5\baselineskip]
      \underline{\verifier_1(\param, \pk, \Cmt_1)} \\
      \pcfor i \in \intoneto{\delta} \pcdo ~\{~ s_i \sampler [0, s - 1], ~ r_i \sampler [0, k - 1] ~\} \\
      \Ch_1 = ((s_1, r_1), \cdots, (s_\delta, r_\delta)) \\[0.5\baselineskip]
      \underline{\prover_2(\param, \sk, \pk, \Cmt_1, \Ch_1)} \\
      \pcfor i \in \intoneto{\delta} \pcdo \\
      \pcind \bm{x}^{s_i}_{r_i} = \rot{\bm{x}^{s_i}}{r_i}, ~ c_{i,3} = \cmt{\pi_i[\bm{u}_i + \bm{x}^{s_i}_{r_i}]} \\
      \Cmt_2 = \cmt{c_{1,3} || \cdots || c_{\delta,3}} \\[0.5\baselineskip]
      \underline{\verifier_2(\param, \pk, \Cmt_1, \Ch_1, \Cmt_2)} \\
      \pcfor i \in \intoneto{\delta} \pcdo ~\{~ b_i \sampler \bit ~\} \\
      \Ch_2 = (b_1, \cdots, b_\delta) \\[0.5\baselineskip]
      \underline{\prover_3(\param, \sk, \pk, \Cmt_1, \Ch_1, \Cmt_2, \Ch_2)} \\
      \pcfor i \in \intoneto{\delta} \pcdo \\
      \pcind \pcif b_i = 0 \pcthen ~ d_i = \big( \theta_i, \, \bm{u}_i + \bm{x}^{s_i}_{r_i}, \, c_{i,2} \big) \\
      \pcind \pcif b_i = 1 \pcthen ~ d_i = \big( \xi_i, \, \pi_i[\bm{x}^{s_i}_{r_i}], \, c_{i,1} \big) \\
      \Rsp = (d_1, \cdots, d_\delta) \\[0.5\baselineskip]
      \underline{\verifier_3(\param, \pk, \Cmt_1, \Ch_1, \Cmt_2, \Ch_2, \Rsp)} \\
      \pcfor i \in \intoneto{\delta} \pcdo \\
      \pcind \pcif b_i = 0 \pcthen \\
      \pcind \pcind \pi_i \samples{\theta_i} \hpermtk, ~ \bar{c}_{i,2} = c_{i,2} \\
      \pcind \pcind \bar{c}_{i,1} = \cmt{\pi_i \, || \, \bm{H}(\bm{u}_i + \bm{x}^{s_i}_{r_i})^{\top} - \rot{\bm{y}^{s_i}}{r_i}^{\top}}  \\
      \pcind \pcind \bar{c}_{i,3} = \cmt{\pi_i[\bm{u}_i + \bm{x}^{s_i}_{r_i}]} \\[0.5\baselineskip]
      \pcind \pcif b_i = 1 \pcthen \\
      \pcind \pcind \bm{v}_i \samples{\xi_i} \Ft^{2k}, ~ \bar{c}_{i,1} = c_{i,1} \\
      \pcind \pcind \bar{c}_{i,2} = \cmt{\bm{v}_i}, ~ \bar{c}_{i,3} = \cmt{\bm{v}_i + \pi_i[\bm{x}^{s_i}_{r_i}]} \\
      \pcind \pcind \pcif \hw{\pi_i[\bm{x}^{s_i}_{r_i}]} \neq w \pcthen \pcreturn \reject \\[0.5\baselineskip]
      \pcif \open{\Cmt_1, \, \bar{c}_{1,1} || \bar{c}_{1,2} || \cdots || \bar{c}_{\delta, 1} || \bar{c}_{\delta,2}} \neq 1 \pcthen \\
      \pcind \pcreturn \reject \\[0.25\baselineskip]
      \pcif \open{\Cmt_2, \, \bar{c}_{1,3} || \cdots || \bar{c}_{\delta,3}} \neq 1 \pcthen \\
      \pcind \pcreturn \reject \\[0.25\baselineskip]
      \pcreturn \accept
    }}}
  \caption{Quasi-Cyclic Stern Protocol (with optimizations)} \label{fig:sd-structured2} 
  \end{center}
  \vspace{-\baselineskip}
\end{figure}

\iflong
\begin{theorem}[Correctness] \label{thm:correctness}
  The proof of knowledge depicted in Figure \ref{fig:sd-structured2} satisfies the correctness property.
\end{theorem}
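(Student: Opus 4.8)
The plan is to show that an honest execution is accepted with probability $1$. Since the protocol consists of $\delta$ parallel iterations and the verifier's final decision is the conjunction of the two $\open$ checks on $\Cmt_1$ and $\Cmt_2$, it suffices to verify that, for every iteration $i \in \intoneto{\delta}$ and for each value of the second challenge $b_i \in \bit$, the commitments $\bar{c}_{i,1}, \bar{c}_{i,2}, \bar{c}_{i,3}$ recomputed by $\verifier_3$ coincide with the values $c_{i,1}, c_{i,2}, c_{i,3}$ that the honest prover committed to inside $\Cmt_1$ and $\Cmt_2$. Once these equalities hold for all $i$, correctness of the underlying commitment scheme $\COM$ guarantees that both $\open$ calls return $1$, so the verifier outputs $\accept$. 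First I would also note that $\decompress$ is the inverse of $\compress$, so $\verifier_3$ recovers exactly the tuples $(d_1, \cdots, d_\delta)$ produced by $\prover_3$; the small-weight and seed compressions therefore play no role in correctness beyond being invertible, and I would dispatch them in one sentence.

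For the case $b_i = 0$, the verifier regenerates $\pi_i$ from the seed $\theta_i$, hence obtains the same permutation as the prover, and sets $\bar{c}_{i,2} = c_{i,2}$ directly; thus the second commitment matches trivially and $\bar{c}_{i,3} = \cmt{\pi_i[\bm{u}_i + \bm{x}^{s_i}_{r_i}]} = c_{i,3}$. The only nontrivial check is $\bar{c}_{i,1} = c_{i,1}$, which amounts to $\bm{H}(\bm{u}_i + \bm{x}^{s_i}_{r_i}) - \rot{\bm{y}^{s_i}}{r_i} = \bm{H}\bm{u}_i$, i.e. $\bm{H}\,\bm{x}^{s_i}_{r_i} = \rot{\bm{y}^{s_i}}{r_i}$. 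Since $\bm{x}^{s_i}_{r_i} = \rot{\bm{x}^{s_i}}{r_i}$, this is precisely the quasi-cyclic compatibility $\bm{y}^\top = \bm{H}\bm{x}^\top \Leftrightarrow \rot{\bm{y}}{r}^\top = \bm{H}\,\rot{\bm{x}}{r}^\top$ recalled in Section~\ref{sec:protocol}, instantiated with $\bm{x} = \bm{x}^{s_i}$, $\bm{y} = \bm{y}^{s_i}$ and $r = r_i$.

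For the case $b_i = 1$, the verifier sets $\bar{c}_{i,1} = c_{i,1}$ directly and regenerates $\bm{v}_i$ from $\xi_i$. The matching of $\bar{c}_{i,2}$ uses the construction $\bm{u}_i = \pi_i^{-1}[\bm{v}_i]$, which gives $\pi_i[\bm{u}_i] = \bm{v}_i$ and hence $\bar{c}_{i,2} = \cmt{\bm{v}_i} = \cmt{\pi_i[\bm{u}_i]} = c_{i,2}$; this is exactly the point where the new seed optimization is consumed. For $\bar{c}_{i,3}$ I would use the $\Ft$-linearity of the permutation action, $\pi_i[\bm{u}_i + \bm{x}^{s_i}_{r_i}] = \pi_i[\bm{u}_i] + \pi_i[\bm{x}^{s_i}_{r_i}] = \bm{v}_i + \pi_i[\bm{x}^{s_i}_{r_i}]$, so that $\bar{c}_{i,3} = c_{i,3}$. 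Finally, the weight test succeeds because a permutation preserves Hamming weight and a rotation is itself a permutation, whence $\hw{\pi_i[\bm{x}^{s_i}_{r_i}]} = \hw{\bm{x}^{s_i}_{r_i}} = \hw{\rot{\bm{x}^{s_i}}{r_i}} = \hw{\bm{x}^{s_i}} = w$, the last equality holding since $\bm{x}^{s_i}$ is the honestly generated secret of weight $w$.

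I do not expect a genuine obstacle here: correctness is entirely mechanical once the two structural facts are isolated. The only point requiring care in the write-up is the quasi-cyclic identity $\bm{H}\,\rot{\bm{x}^{s_i}}{r_i}^\top = \rot{\bm{y}^{s_i}}{r_i}^\top$, whose cleanest justification is that $\rot{}{}$ commutes with multiplication by the circulant blocks of $\bm{H}$; once this is granted, the two cases above close immediately, and gathering the per-iteration equalities shows that both $\open$ verifications pass with probability $1$, which establishes the claim.
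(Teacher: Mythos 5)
Your proof is correct and matches the paper's approach: the paper simply states that correctness ``follows straightforwardly from the protocol description,'' and your write-up is exactly the direct case-by-case verification that this one-line claim implicitly relies on (the quasi-cyclic identity for $b_i=0$, the relation $\pi_i[\bm{u}_i]=\bm{v}_i$ and weight preservation for $b_i=1$, plus invertibility of $\mathsf{Compress}$ and correctness of $\COM$). No gaps.
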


\begin{proof}
  The correctness follows straightforwardly from the protocol description.
\end{proof}
\fi

\vspace{0.5\baselineskip}
\begin{definition}[$\DSD$ problem] \label{def:diffsd}
   Given positive integers ($n=2k$), $k$, $w$, $\alpha$, a
   parity-check matrix of a quasi-cyclic code $\bm{H} \sampler
   \mathcal{QC}(\Ft^{\kttk})$ and $\bm{y} \in \Ft^{k}$ such
   that $\bm{Hx}^\top = \bm{y}^\top$ where $\bm{x} \in \Ft^{2k}$ and
   $\hw{\bm{x}} = w$. The Differential Syndrome Decoding problem
   $\DSD(n,k,w,\alpha)$ asks to find a set of vectors $(\bm{c},
   (\bm{z}_1, \cdots, \bm{z}_{\alpha})) \in \Ft^{k} \times
   (\Ft^{2k})^\alpha$ such that for each $i \in \intoneto{\alpha}$, 
   $\bm{H}\bm{z}_i^\top + \bm{c} = \bm{rot}_{i}(\bm{y}^\top)$  with $\hw{\bm{z}_i} = w$.
\end{definition}

\vspace{0.5\baselineskip}
\begin{theorem}[\textbf{$\QCSD$ to $\DSD$ reduction} \cite{AGS11, schrekPhD}] \label{thm:reduction}
  If there exists a Probabilistic Polynomial-Time ($\ppt$) algorithm solving the $\DSD(n,k,w,\alpha)$ problem with success probability $p$, 
  then there exists a $\ppt$ algorithm solving the $\QCSD(n,k,w)$ problem with success probability $(1 - \frac{{n \choose w}^{\alpha-1}}{2^{(n-k)(\alpha - 2)}}) \cdot p$.
\end{theorem}
\vspace{0.5\baselineskip}

\iflong
\begin{proof}
  The proof of Theorem \ref{thm:reduction} can be found in \cite{AGS11, schrekPhD}.
\end{proof}
\fi

The security of multi-round Fiat-Shamir transformation has been analyzed in \cite{EPRINT:AttFehKlo21}.
Following their definitions, we provide a soundness proof compatible with 5-round protocols.
Such a proof was lacking in previous quasi-cyclic based proposals.
A $(q,2)$-tree of transcripts for a $5$-round (public coin) protocol is a set of $2q$ transcripts arranged in a tree structure. 
The nodes in the tree represent the prover's messages and the edges between the nodes correspond to the verifier's challenges. 
%
\iflong
\begin{proof}
The root of the tree is the first prover's message and has exactly $q$ children corresponding to the $q$ pairwise distinct challenges.
Each node at depth $1$ has exactly $2$ children leaf nodes, which correspond to final responses to the challenge bit sent in the fourth round.
\end{proof}
\fi
%
Each transcript is represented by a path from the three root to a leaf node.
We say that the protocol is $(q, 2)$ special-sound if there exists a $\ppt$ algorithm that
on an input statement and a $(q,2)$-tree of accepting transcripts outputs a witness. 

\vspace{0.5\baselineskip}
\begin{theorem}[\textbf{(sk, 2)-Special Soundness}] \label{thm:soundness}
  Let $k$ and $\delta$ be public parameters denoting the dimension of a $[n=2k, k]$ QC code and the number of iterations within the protocol.
  If $\COM$ is a binding commitment scheme, then the PoK depicted in Figure~\ref{fig:sd-structured2} is sound with soundness error $(\frac{sk + \alpha - 1}{2sk})^\delta$ for some parameters $\alpha$ and $s$ under the $\QCSD$ assumption.
\end{theorem}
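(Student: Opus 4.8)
The plan is to prove soundness by contradiction: assuming a \ppt{} malicious prover $\tilde{\prover}$ convinces $\verifier$ with probability noticeably larger than $\big(\frac{sk+\alpha-1}{2sk}\big)^{\delta}$, I would construct a \ppt{} extractor that, using $\tilde{\prover}$ as a rewindable black box, outputs a solution to $\DSD(n,k,w,\alpha)$; by Theorem~\ref{thm:reduction} this yields a $\QCSD(n,k,w)$ solution and contradicts the $\QCSD$ assumption. Since $\verifier$ accepts only when all $\delta$ iterations pass and the iterations use independent challenges, I would first isolate the soundness error of a single iteration and then recover the $\delta$-th power by a forking/rewinding argument that fixes the first commitment $\Cmt_1$ and regrows transcripts for different values of $\Ch_1$ and $\Ch_2$.

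The heart of the argument is a per-iteration special-soundness claim. Fix $\Cmt_1$ and an iteration $i$, and call a first challenge $(s,r)$ \emph{doubly answered} if, for the second commitment $c_{i,3}$ produced after $(s,r)$, the prover has accepting responses for both $b_i=0$ and $b_i=1$. Invoking the binding property of $\COM$, each commitment opened inside the collected transcripts opens to a single value except with negligible probability; in particular $c_{i,1}$ and $c_{i,2}$ open to unique values, which pins down a common permutation $\pi_i$ and a common vector $\bm{u}^{*}$ independent of $(s,r)$. For a doubly answered $(s,r)$, the $b_i=0$ branch reveals $\pi_i$ (via $\theta_i$) and a vector $\bm{a}$ and forces $c_{i,1}$ to open to $\pi_i \,||\, \bm{H}\bm{a} - \rot{\bm{y}^{s}}{r}$, the $b_i=1$ branch reveals a weight-$w$ vector $\bm{p}$ (whose weight $\verifier$ checks), and the two openings of $c_{i,3}$ force $\bm{a} = \bm{u}^{*} + \pi_i^{-1}[\bm{p}]$. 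Writing $\bm{z}_{(s,r)} := \pi_i^{-1}[\bm{p}]$ (of weight $w$) and eliminating $\bm{a}$ gives $\bm{H}\,\rot{\bm{x}^{s}}{r}^{\top} = \bm{c} + \bm{H}\bm{z}_{(s,r)}^{\top}$, where the offset $\bm{c}$ is built from the unique openings of $c_{i,1}$ and $c_{i,2}$ and does \emph{not} depend on $(s,r)$ (using $\rot{\bm{y}^{s}}{r}^{\top} = \bm{H}\,\rot{\bm{x}^{s}}{r}^{\top}$ from quasi-cyclicity).

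Given this local extraction the counting is immediate. For fixed $\Cmt_1$, let $A$ be the set of doubly answered first challenges in the iteration; since every non-doubly-answered challenge contributes at most one accepting $b_i$, the probability that a uniform $(s,r,b)$ is answered is at most $\frac{2|A| + (sk - |A|)}{2sk} = \frac{sk+|A|}{2sk}$. If $|A|\geq\alpha$, the extracted tuples share the common offset $\bm{c}$ and carry $\alpha$ distinct rotation indices, i.e.\ a $\DSD(n,k,w,\alpha)$ solution; hence under the $\QCSD$ assumption $|A|\leq\alpha-1$ except with negligible probability, bounding the single-iteration error by $\frac{sk+\alpha-1}{2sk}$. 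Composing over the $\delta$ independent iterations yields the claimed $\big(\frac{sk+\alpha-1}{2sk}\big)^{\delta}$.

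The main obstacle is turning this combinatorial description into an efficient extractor. Because $c_{i,3}$ is committed only after $\Ch_1$ is fixed, the extractor cannot reuse one second commitment across different $(s,r)$; it must grow a transcript tree that, for the same $\Cmt_1$, reaches $\alpha$ distinct first challenges each expanded into both values of $b_i$, and do so in expected polynomial time whenever $\tilde{\prover}$'s advantage exceeds the bound. This calls for a careful forking-lemma-style analysis (in the spirit of \cite{AGS11}) guaranteeing that enough doubly answered challenges are collected simultaneously within a single iteration, while charging only a negligible loss to each binding reduction. I would also verify the interaction with the parameter $s$ (the single-secret $\DSD$ instance is built from doubly answered challenges, so one must reconcile the per-instance extraction with the stated global count) and confirm that the parallel-repetition step does not degrade beyond the product of the per-iteration errors for this five-round structure.
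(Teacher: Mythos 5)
Your proposal follows essentially the same route as the paper's proof: isolate a single iteration, use the binding property of $\COM$ to show that $\alpha$ doubly-answered first challenges yield a $\DSD(n,k,w,\alpha)$ solution with a common offset, bound the per-iteration success probability by $\frac{sk+\alpha-1}{2sk}$, and conclude via the reduction of Theorem~\ref{thm:reduction} under the $\QCSD$ assumption. If anything, your write-up is more explicit than the paper's on the counting step ($\frac{sk+|A|}{2sk}$) and on the rewinding/extractor-efficiency and multi-instance ($s>1$, distinct rotation indices) subtleties, which the paper passes over silently.
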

\vspace{0.5\baselineskip}

\iflong

\begin{proof}
  
  We first prove that if there is an adversary $\adv$ that is able to cheat with probability greater than $(\frac{sk + \alpha - 1}{2sk})^\delta$ then one can build a knowledge extractor $\ext$ that solve the $\DSD(n, k, w, \alpha)$ problem.
  Such an adversary is able to cheat with probability at least $(\frac{sk + \alpha}{2sk})$ in a given iteration $i$ where $i \in [1, \delta]$. 
  Within this specific iteration (we omit the index $i$ for simplicity hereafter), 
  we show that the protocol is $(sk,2)$-special-sound. In order to prove this we need to show that given $(sk,2)$-tree of accepting 
  transcripts, one can efficiently extract the witness $x$. Note that in order the prove $(sk,2)$-special-soundness, the extractor
  is given $2sk$ accepting transcripts, however our proof strategy requires only $sk + \alpha$ accepting transcripts (with pairwise distinct challenges
  in each round) for $\alpha \geq 1$. 
  We begin by first visualizing the $(sk,2)$-tree of accepting transcripts, note that each accepting transcript corresponds to a path from the root
  (which represents the first commitment message) to the leaf nodes. The root has $sk$ children corresponding to the first challenge message, and
  subsequently each of these children has exactly $2$ leaf nodes attached to them corresponding to the second challenge message.
 Since the extractor is given $sk + \alpha$ accepting transcripts, by pigeonhole principle there are exactly $\alpha$ nodes at first level such that
  the extractor knows both the leaf nodes attached to these intermediate nodes. However, this is sufficient for extracting witness already since knowledge
  of messages corresponding to \emph{both leaf nodes sharing the same parent intermediate node} implies that the extractor knows the prover's response
  for both challenge bits $b = 0$ and $b = 1$. We next show how this information is used to solve the $\DSD(n, k, w, \alpha)$ problem. 
  Let $((s_1, r_1), \cdots, (s_{\alpha}, r_{\alpha}))$ be the challenges for which the adversary $\adv$ is able to produce accepting transcripts for both the cases $b = 0$ and $b = 1$.
  Let $d^{\, r_j, b} = (d^{\, r_j, b}_1, d^{\, r_j, b}_2, d^{\, r_j, b}_3)_{j \in [1,\alpha], b \in \{0,1\}}$ denote the responses of the adversary $\adv$ 
  to these challenges, one can build a knowledge extractor $\ext$ as follows:

  \vspace{0.25\baselineskip}
  \pseudocode{%
    \tx{1. Let } \bar{\theta}^j = d^{\, r_j,0}_1, \tx{ compute } \bar{d}^{\, r_j,0}_1 \samples{\bar{\theta}^j} \hperm \tx{ for all } j \in [1, \alpha]\\
    \tx{2. Let } \bar{\xi}^j = d^{\, r_j,1}_1, \tx{ compute } \bar{d}^{\, r_j,1}_1 \samples{\bar{\xi}^j} \Ftn \tx{ for all } j \in [1, \alpha] \\
    \tx{3. Compute } \bm{c}_1 = \bar{d}^{\, r_1,1}_1 = \cdots = \bar{d}^{\, r_{\alpha},1}_1 \\
    \tx{4. Compute } \pi = \bar{d}^{\, r_1,0}_1 = \cdots = \bar{d}^{\, r_{\alpha},0}_1 \\ 
    \tx{5. Compute } \bm{c}_2 = \bm{H} d^{\, r_1,0}_2 - \rot{\bm{y}}{r_1} = \cdots = \bm{H} d^{\, r_{\alpha},0}_2 - \rot{\bm{y}}{r_{\alpha}} \\
    \tx{6. Compute } \bm{c}_3 = \bm{H}(\pi^{-1}[\bm{c}_1]) - \bm{c}_2 \\
    \tx{7. Compute } \bm{z}_j = \pi^{-1}[d^{\, r_j,1}_2] \tx{ for all } j \in [1, \alpha] \\
    \tx{8. Output } (\bm{c}_3, \bm{z}_1, \cdots, \bm{z}_{\alpha})
  }
  \vspace{0.5\baselineskip}

  \noindent We now prove that $(\bm{c}_3, \bm{z}_1, \cdots, \bm{z}_{\alpha})$ is a solution to the $\DSD(n, k, w, \alpha)$ problem. 
  If $\COM$ is a binding commitment scheme, one can compute $\bm{c}_1$ as all the $\bar{d}^{\, r_j,1}_1$ values are equals due to commitment $c_2$.
  Likewise, one can compute $\pi$ and $\bm{c}_2$ as all the values $\bar{d}^{\, r_j,0}_1$ and $\bm{H} d^{\, r_j,0}_2 - \rot{\bm{y}}{r_j}$ are equals due to commitment $c_1$.
  Using the same argument for commitment $c_3$, one has $\pi[d^{\, r_j,0}_2] = \bm{c}_1 + d^{\, r_j,1}_2$ hence $d^{\, r_j,0}_2 = \pi^{-1}[\bm{c}_1] + \pi^{-1}[d^{\, r_j,1}_2]$ for all $j \in [1,\alpha]$.
  From commitment $c_1$, one deduces that $\bm{c}_2 = \bm{H}(\pi^{-1}[\bm{c}_1]) + \bm{H}(\pi^{-1}[d^{\, r_j,1}_2]) - \rot{\bm{y}}{r_j}$ for all $j \in [1,\alpha]$. 
  Using quasi-cyclic codes, one gets $\rot{\bm{y}}{r_j} = \bm{H}\,\rot{\bm{x}}{r_j}$ which gives $\bm{H}\,\rot{\bm{x}}{r_j} = \bm{c}_3 + \bm{H} \bm{z}_j$ for all $j \in [1, \alpha]$.
  In addition, one have $\hw{\bm{z}_j} = w$ as $\hw{d^{\, r_j,1}_2} = w$  for all $j \in [1, \alpha]$.
  Therefore $(\bm{c}_3, \bm{z}_1, \cdots, \bm{z}_{\alpha})$ is a solution to the $\DSD(n, k, w, \alpha)$ problem.
  Using Theorem \ref{thm:reduction} completes the proof as there exists a reduction from $\QCSD(n, k, w)$ to $\DSD(n, k, w, \alpha)$ for an appropriate choice of~$\alpha$.

\end{proof}

\else

\begin{proof}
  Informally, one can show that if an adversary is able to cheat with probability greater than $(\frac{sk + \alpha - 1}{2sk})^\delta$, then he is able to cheat in at least one iteration of the protocol.
  For that particular iteration of the protocol, one can prove that if an adversary can successfully answer at least $sk + \alpha$ challenges over the $2sk$ possible ones, then he is able to solve the $\DSD(n, k, w, \alpha)$ problem.
  Indeed, this means that given a fixed first commitment $\Cmt_1$, there exists $\alpha$ first challenges $\Ch_1$ for which the adversary is able to answer both second challenges $\Ch_2 = 0$ and $\Ch_2 = 1$.
  Each pair of accepting transcript associated to a challenge $\alpha$ allows to retrieve one unknown $(\bm{z}_i)_{i \in [1, \alpha]}$ of the $\DSD(n, k, w, \alpha)$ hence the adversary can solve it.
  Formally, one can extract the aforementioned $2\alpha$ accepting transcripts from the given $(sk,2)$-tree of transcripts and build a knowledge extractor for the $\DSD(n, k, w, \alpha)$ problem. 
  One completes the proof using Theorem~\ref{thm:reduction}, we defer the reader to~\cite{ISIT22-long} for the full proof.
\end{proof}
\fi

\vspace{0.5\baselineskip}
\begin{theorem}[\textbf{Honest Verifier Zero-Knowledge}] \label{thm:hvzk}
  If $\COM$ is a hiding commitment scheme, then the PoK depicted in Figure~\ref{fig:sd-structured2} satisfies the Honest-Verifier Zero-Knowledge property.
\end{theorem}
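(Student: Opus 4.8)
The plan is to exhibit a $\ppt$ simulator $\simulator$ that, on input $(\param, \pk)$ with $\pk = (\phi_2, \bm{y}^1, \cdots, \bm{y}^s)$, outputs a transcript computationally indistinguishable from an honest one, \emph{without} using $\sk$. I would exploit the public-coin structure of the protocol: the honest verifier's messages $\Ch_1$ and $\Ch_2$ are drawn independently of the prover's commitments, so $\simulator$ can first sample the challenges exactly as the honest verifier does, namely $s_i \sampler [0, s-1]$, $r_i \sampler [0, k-1]$ and $b_i \sampler \bit$ for every $i \in \intoneto{\delta}$, and then fabricate commitments and a response working backwards from what $\verifier_3$ recomputes.

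For each iteration $i$ the simulation splits on $b_i$. If $b_i = 0$, I would sample a seed $\theta_i$ (yielding $\pi_i$), a uniform $\bm{w}_i \sampler \Ft^{2k}$ playing the role of $\bm{u}_i + \bm{x}^{s_i}_{r_i}$, and a commitment $c_{i,2}$ to an arbitrary message; then set $\bar{c}_{i,1} = \cmt{\pi_i \,||\, \bm{H}\bm{w}_i - \rot{\bm{y}^{s_i}}{r_i}}$, $\bar{c}_{i,2} = c_{i,2}$, $\bar{c}_{i,3} = \cmt{\pi_i[\bm{w}_i]}$ and $d_i = (\theta_i, \bm{w}_i, c_{i,2})$. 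If $b_i = 1$, I would sample a seed $\xi_i$ (yielding $\bm{v}_i$), a uniformly random weight-$w$ vector $\bm{t}_i$ playing the role of $\pi_i[\bm{x}^{s_i}_{r_i}]$, and a commitment $c_{i,1}$ to an arbitrary message; then set $\bar{c}_{i,1} = c_{i,1}$, $\bar{c}_{i,2} = \cmt{\bm{v}_i}$, $\bar{c}_{i,3} = \cmt{\bm{v}_i + \bm{t}_i}$ and $d_i = (\xi_i, \bm{t}_i, c_{i,1})$. Finally $\simulator$ forms the outer commitments honestly from the recomputed values, $\Cmt_1 = \cmt{\bar{c}_{1,1} || \bar{c}_{1,2} || \cdots || \bar{c}_{\delta,1} || \bar{c}_{\delta,2}}$ and $\Cmt_2 = \cmt{\bar{c}_{1,3} || \cdots || \bar{c}_{\delta,3}}$, and outputs $(\Cmt_1, \Ch_1, \Cmt_2, \Ch_2, \compress{d_1, \cdots, d_\delta})$. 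Both final opening checks then succeed by correctness of $\COM$, and the weight test passes because $\hw{\bm{t}_i} = w$; hence the simulated transcript is always accepting.

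The indistinguishability argument I would give has two parts. First, the components the honest verifier reconstructs carry the same distribution: the challenges are drawn identically; the revealed seeds $\theta_i, \xi_i$ are uniform in both worlds; the revealed vector $\bm{w}_i$ matches the honest $\bm{u}_i + \bm{x}^{s_i}_{r_i}$ because $\bm{u}_i = \pi_i^{-1}[\bm{v}_i]$ is (pseudo)uniform and thus acts as a one-time mask independent of the unrevealed $\xi_i$; and $\bm{t}_i$ matches the honest $\pi_i[\bm{x}^{s_i}_{r_i}]$ because $\pi_i$ is a uniformly random permutation, so the image of a fixed weight-$w$ vector is uniform among weight-$w$ vectors. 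Conditioned on these revealed values, the recomputed $\bar{c}_{i,j}$ and the outer commitments $\Cmt_1, \Cmt_2$ are produced by the very same formulas as in an honest execution. Second, the only genuinely differing part is the single unopened inner commitment per iteration, $c_{i,2}$ when $b_i = 0$ and $c_{i,1}$ when $b_i = 1$, which the simulator sets to a commitment of a random message while the honest prover commits to the secret-dependent $\pi_i[\bm{u}_i]$, resp. $\pi_i \,||\, \bm{H}\bm{u}_i$. Since these are never opened, I would bound their effect by a hybrid argument that swaps them one iteration at a time, each swap being detectable only with the advantage of the commitment hiding game.

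The main obstacle is this last hybrid reduction: I must show that any distinguisher between real and simulated transcripts yields an adversary against the computational hiding of $\COM$, walking cleanly through the $\delta$ hybrids and confirming that the $s$ secret instances and the seed-compression / $\mathsf{Compress}$--$\mathsf{Decompress}$ reorganization leak nothing extra, since they only repackage which already-simulated seeds and values are transmitted. A secondary point to treat carefully is the masking claim that $\bm{u}_i$ behaves as a uniform one-time pad; this is where the (pseudo)randomness of the seed expansion $\bm{v}_i \samples{\xi_i} \Ft^{2k}$ enters and is what makes the resulting indistinguishability computational rather than perfect.
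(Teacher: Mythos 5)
Your proposal is correct and shares the same distributional core as the paper's proof, but the simulator is constructed along a genuinely different route. The paper's simulator guesses the challenges, plays the protocol against the verifier, and rewinds on a wrong guess; within each iteration it manufactures a fake witness $\bm{\tilde{x}}^{s_i}_{r_i}$ --- a solution of $\bm{H} \bm{\tilde{x}}^{s_i}_{r_i} = \rot{\bm{y}^{s_i}}{r_i}$ with no weight constraint when $b_i = 0$, a uniformly random weight-$w$ vector when $b_i = 1$ --- and then runs the honest prover's code verbatim with that fake witness, so all three inner commitments are computed for real and acceptance is immediate. You instead exploit the honest-verifier public-coin structure to fix $\Ch_1, \Ch_2$ up front (no rewinding, strictly polynomial time, which incidentally sidesteps the paper's questionable accounting of $2 \times s \times k \times \delta$ expected rewinds for a guess made over all $\delta$ iterations at once), sample the revealed quantities directly ($\bm{w}_i$ uniform, $\bm{t}_i$ uniform of weight $w$), and fill the single unopened inner commitment with an arbitrary message. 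The distributional lemmas are identical in both proofs: the masked vector is uniform because $\bm{u}_i$ acts as a one-time pad, and $\pi_i[\bm{x}^{s_i}_{r_i}]$ is uniform on weight-$w$ words because $\pi_i$ is a uniform permutation. What your route buys is an explicit localization of where computational hiding is consumed --- the per-iteration hybrid over the unopened commitment, plus the pseudorandomness of the expansion $\bm{v}_i \samples{\xi_i} \Ft^{2k}$ --- which the paper compresses into a single sentence; what the paper's route buys is that the unopened commitment carries honestly-distributed content (it is a genuine commitment to $\pi_i[\bm{u}_i]$, resp.\ $\pi_i \, || \, \bm{H}\bm{u}_i$), so less weight rests on the hiding property and the verification equations pass by inspection of the honest prover's code. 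The obligations you defer (the $\delta$-step hybrid reduction to the hiding game and the treatment of seed expansion and $\mathsf{Compress}$) are routine and correctly identified, so I see no gap.
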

\vspace{0.5\baselineskip}

\iflong
\begin{proof} 
   We build a $\ppt$ simulator $\simulator$ that generates the view of an honest verifier with access to the public key $\pk = (\phi_2, \bm{y}^{1}, \cdots, \bm{y}^{s})$ only:\\
   \pseudocode{%
     \tx{1. } \simulator \tx{ samples } G_1 = ((s_1, r_1), \cdots, (s_\delta, r_\delta)) \sampler ([0, s - 1] \times [0, k-1])^\delta \tx{ and computes} \\ 
     \pcind \pcfor i \in \intoneto{\delta} \pcdo \\
     \pcind \pcind \theta_i \sampler \bit^{|\seed|}, ~ \pi_i \samples{\theta_i} \hperm, ~ \xi_i \sampler \bit^{|\seed|}, ~ \bm{v}_i \samples{\xi_i} \Ft^{2k}, ~ \bm{u}_i = \pi_i^{-1}[\bm{v}_i] \\
     \pcind \pcind \bm{H} \samples{\phi_2} \mathcal{QC}(\Ft^{\kttk}), ~ c_{i,1} = \cmt{\pi_i \, || \, \bm{H} \bm{u}_i}, ~ c_{i,2} = \cmt{\pi_i[\bm{u}_i]} \\
     \tx{2. } \simulator \tx{ sends } \Cmt_1 = \cmt{c_{1,1} || c_{1,2} || \cdots || c_{\delta,1} || c_{\delta,2}} \tx{ to } \adv \\
     \tx{3. } \adv \tx{ answers } \Ch_1 = ((s_1, r_1), \cdots, (s_\delta, r_\delta)) \tx{ to } \simulator \\
     \tx{4. } \tx{If } G_1 \neq \Ch_1 \tx{, rewind to step 1} \\
     \tx{5. } \simulator  \tx{ samples } G_2 = (b_1, \cdots, b_\delta)^\delta \sampler \bit^\delta \tx{ and computes} \\
     \pcind \pcfor i \in \intoneto{\delta} \pcdo \\
     \pcind \pcind \pcif b_i = 0 \pcthen \tx{ compute } \bm{\tilde{x}}^{s_i}_{r_i} \tx{ such that } \bm{H} \bm{\tilde{x}}^{s_i}_{r_i} = \rot{\bm{y}^{s_i}}{r_i} \tx{ (no weight constraint) } \\
     \pcind \pcind \pcif b_i = 1 \pcthen \tx{ compute } \bm{\tilde{x}}^{s_i}_{r_i} \sampler \Ftn \tx{ such that } \hw{\bm{\tilde{x}}^{s_i}_{r_i}} = w \\
     \pcind \pcind c_{i,3} = \cmt{\pi_i[\bm{u}_i  + \bm{\tilde{x}}^{s_i}_{r_i}]} \\
     \tx{6. } \simulator \tx{ sends } \Cmt_2 = \cmt{c_{1,3} || \cdots || c_{\delta,3}} \tx{ to } \adv. \\
     \tx{7. } \adv \tx{ answers } \Ch_2 = (b_1, \cdots, b_\delta) \tx{ to } \simulator. \\
     \tx{8. } \tx{If } G_2 \neq \Ch_2 \tx{, rewind to step 5} \\ 
     \tx{9. } \simulator \tx{ computes } \\
     \pcind \pcfor i \in \intoneto{\delta} \pcdo \\
     \pcind \pcind \pcif b_i = 0 \pcthen~ d_i = \big( \theta_i, \, \bm{u}_i + \bm{\tilde{x}}^{s_i}_{r_i}, \, c_{i,2} \big) \\
     \pcind \pcind \pcif b_i = 1 \pcthen~ d_i = \big( \xi_i, \, \pi_i[\bm{\tilde{x}}^{s_i}_{r_i}], \, c_{i,1} \big) \\
     \tx{10. } \simulator \tx{ sends } \Rsp = \compress{d_1, \cdots, d_\delta} \tx{ to } \adv. 
   }

  \noindent The view generated by $\simulator$ is $(\Cmt_1, \Ch_1, \Cmt_2, \Ch_2, \Rsp)$.
  If $\COM$ is an hiding commitment scheme, then $\Cmt_1$ and $\Cmt_2$ are indistinguishable in the simulation and during the real execution.
  In the honest verifier setting, $\Ch_1$ and $\Ch_2$ are generated similarly in the simulation and in the real execution.
  Thus, one only needs to check $\Rsp$.
  The simulator $\simulator$ behaves exactly as an honest prover $\prover$ except with respect to the values $(\bm{u}_i + \bm{\tilde{x}}^{s_i}_{r_i})$ and $\pi_i[\bm{\tilde{x}}^{s_i}_{r_i}]$.
  One can see that $(\bm{u}_i + \bm{x}^{s_i}_{r_i})$ and $(\bm{u}_i + \bm{\tilde{x}}^{s_i}_{r_i})$ follow the same probability distribution as $\bm{u}_i$ is sampled uniformly at random from $\Ftn$.
  In addition, $\pi_i[\bm{x}^{s_i}_{r_i}]$ and $\pi_i[\bm{\tilde{x}}^{s_i}_{r_i}]$ follow the same probability distribution as (i) $\bm{\tilde{x}}^{s_i}_{r_i}$ and $\bm{x}^{s_i}$ are sampled uniformly at random from the same distribution and (ii) $\bm{x}^{s_i}$ and $\bm{x}^{s_i}_{r_i}$ follow the same probability distribution.
  Finally, one can see that the simulator runs in polynomial time with $2 \times s \times k \times \delta$ expected rewinds.
\end{proof}

\else

\begin{proof}
  Informally, the transcript contains commitments and tuples of the form $(\pi_i, \bm{u}_i + \bm{x}^{s_i}_{r_i})$ for $b_i = 0$ or $(\pi_i[\bm{u}_i], \pi_i[\bm{x}^{s_i}_{r_i}])$ for $b_i = 1$ (but not both tuples for a given index $i$).
  If the commitment used are hiding, they don't leak anything on the secret.
  In addition, when $b_i = 0$ the secret $\bm{x}^{s_i}_{r_i}$ is masked by the random value $\bm{u}_i$ and when $b_i = 1$ the secret $\bm{x}^{s_i}_{r_i}$ is masked by the random permutation $\pi_i$.
  Formally, one can built a simulator that generates the view of an honest verifier with access to the public key only, we defer the reader to \cite{ISIT22-long} for more details.
\end{proof}

\fi


\section{Parameters and Resulting Sizes} \label{sec:parameters}

The protocol described in Figure~\ref{fig:sd-structured2} can be turned into
a signature using the Fiat-Shamir transform \cite{FS, DPJS96, ExtendedFS, ExtendedFS2, EPRINT:AttFehKlo21}. Hereafter, we discuss
the choice of our parameters and compare our protocol with existing schemes.

\vspace{0.5\baselineskip}
\noindent \textbf{Decoding attack.} We consider the BJMM generic decoder~\cite{BJMM12} 
  with estimates from \cite{HS13}. The parameters
  $(n,k,w)$ are chosen such that decoding $w$ errors in a binary
  quasi-cyclic $[n,k]$ code costs at least $2^\lambda$. The attacker
  has access to $N=sk$ syndromes ($k$ rotations of $s$ public
  keys) and is successful by decoding only one of them. As shown in
  \cite{Sen11}, this multiple targets attack reduces the complexity by
  a factor at most $\sqrt{N}$. 

\vspace{0.5\baselineskip}
\noindent \textbf{Soundness error.} Following \cite{AGS11}, for given $(n,k,w)$,
  solving $\DSD(n,k,w,\alpha)$ provides a solution to $\QCSD(n,k,w)$
  with probability $1-\varepsilon(\alpha)$ where
  $\varepsilon(\alpha)\approx{n\choose
    w}^{\alpha-1}/2^{(n-k)(\alpha-2)}$.
  The soundness error for one iteration cannot exceed
  $\rho^*=\frac{sk+\alpha^*-1}{2sk}$ where $\alpha^*$ is the largest
  integer such that $\varepsilon(\alpha^*)\le2^{-\lambda}$. The
  soundness error for $\delta$ iterations is $(\rho^*)^\delta$ and it
  is lower than $2^{-\lambda}$ if
  $\delta\ge\frac{-\lambda}{\log_2\rho^*}$.

\vspace{0.5\baselineskip}
\noindent \textbf{Attack against 5-rounds protocols.} The attack of
\cite{KZ20} can be used against our protocol. For $\delta$ iterations 
of the protocol, the attacker will find the value of $\tau^*$ (the number 
of second challenges to guess) which minimizes the attack cost 
$P^{-1} + 2^{\delta-\tau^*}$ where
$P=\sum_{\tau\ge\tau^*}{\delta\choose
  \tau}\left(\frac{1}{sk}\right)^\tau\left(\frac{sk-1}{sk}\right)^{\delta-\tau}$. The
choice of $\delta$ must be such that this cost is $\ge 2^\lambda$.

\vspace{0.5\baselineskip}
\noindent \textbf{Signature length and scalability.} The signature
consists of the outputs of $\prover_1, \prover_2, \prover_3$ namely two 
commitments and a seed along with all the $d_i$ (see
Figure~\ref{fig:sd-structured2}). Each response $d_i$ consists of a
seed, a commitment, and a word of $\Ftn$ with no particular structure
if $b_i=0$ and of weight $w$ if $b_i=1$.  The seeds and commitments are taken of
length $\lambda$ and $2\lambda$ respectively. The
words of weight $w$ can be compressed to $n-k$ bits. Finally the
seeds and commitments can be structured pairwise as explained in
Section~\ref{sec:optimization}, allowing to save one seed and one commitment every 
4 iterations on average. For codes of rate $1/2$ ($k=n/2$) the average length of
the signature is $|\sigma| = 5 \lambda + \delta(0.75n + 2.25\lambda)$. Both
$\delta$ and $n$ will grow linearly with the security parameters
$\lambda$ and thus the signature length grows as $\lambda^2$, roughly
we have here $|\sigma|\approx11\lambda^2$.

\begin{table}[!htb]
  \caption{Parameters and signature sizes in bytes for $\lambda = 128$} \label{table:param1}
  \vspace{-0.5\baselineskip}
  \begin{center}
    {\scriptsize {\renewcommand{\arraystretch}{1.3}
    \begin{tabular}{|c|c|c|c|c|c|c|c|}
        \hline
        $n$ & $k$ & $w$ & $\delta$ & $s$ & $\sk$ size & $\pk$ size & $\sigma$ size \\ \hline
        \multirow{3}{*}{~1306~} & \multirow{3}{*}{~653~} & \multirow{3}{*}{~137~} & 151 &  1 & 16 B & 0.1 kB & 24.1 kB \\ \cline{4-8}
                                                                              & & & 145 &  4 & 16 B & 0.4 kB & 23.1 kB \\ \cline{4-8}
                                                                              & & & 141 & 20 & 16 B & 1.7 kB & 22.5 kB \\ \hline
    \end{tabular}}}
  \end{center}
\vspace{-0.5\baselineskip}
\end{table}

\vspace{0.5\baselineskip}
\noindent \textbf{Comparison with code-based schemes.} 
We compare our proposal to code-based signatures constructed from PoK for the SD problem in Table~\ref{table:param2}.
We consider both size and expected performances as criteria.
%
\iflong
Since there is no implementation available for most of these schemes yet, we provide an estimate of their expected relative performances following the methodology and parameters from \cite{BGKM22}.
For all these schemes, the first step (every operations executed by the prover before he outputs its first commitment) is likely to dominate the overall performance cost.
This step can be seen as repeating $\mu$ times the computation of $\nu$ operations whose cost is approximated to be equal amongst schemes.
This introduces an approximation in our comparison which could only be solved by providing and benchmarking actual implementations of the schemes. 
In particular, this approximation hides the performance difference between using plain matrices and structured ones which is not negligible in practice. 
As such, one should compare the schemes involving a regular matrix / vector multiplication separately from the schemes that have a more efficient one thanks to the use of structured matrices.
Moreover, the GPS scheme does not include such a multiplication in this step hence the proposed estimate might overestimate its real cost.
%
\else
Since there is no implementation available for most of these schemes yet, we provide an estimate of their expected relative performances.
For all these schemes, the first step (every operations executed by the prover before he outputs its first commitment) is likely to dominate the overall performance cost.
This step can be seen as repeating $\mu$ times the computation of $\nu$ operations whose cost is approximated to be equal amongst schemes.
We refer the reader to the full version of this paper for a discussion about the relevance and limits of this metric \cite{ISIT22-long}.
\fi
%
Overall, one can see that our proposal offers an interesting trade-off between cost and sizes as it has the smallest expected cost while still featuring competitive sizes.

In addition, we also present in Table~\ref{table:param3} the sizes of other code-based signatures.
Wave~\cite{wave19} is based on the SD problem over $F_3$ (with secrets of large weights) and the Generalized $(U, U+V)$-codes indistinguishability, LESS~\cite{less} relies on the permutation code equivalence problem and Durandal~\cite{Durandal} is based on the rank SD problem and the product spaces subspaces indistinguishability.

\begin{table}[!htb]
  \caption{Signatures from PoK for the SD problem ($\lambda = 128$)} \label{table:param2}
  \vspace{-0.5\baselineskip}
  \begin{center}
    {\scriptsize{\renewcommand{\arraystretch}{1.3}
    \begin{tabular}{|l|c|c|c|c|c|}
      \cline{2-6}
      \multicolumn{1}{c|}{}  & \multicolumn{3}{c|}{Performance} & \multicolumn{2}{c|}{Size} \\ \cline{2-6}
      \multicolumn{1}{c|}{}  & $\mu$ & $\nu$ & Cost & $\pk$ & $\sigma$ \\ \hline 
      Stern \cite{Stern93}                         & 219  & 2    & 438       & 0.1 kB                   & 36.2 kB                   \\ \hline 
      Véron \cite{Veron97, ISIT21}                 & 219  & 2    & 438       & 0.2 kB                   & 30.8 kB                   \\ \hline 
      CVE \cite{CVE11}                             & 156  & 2    & 312       & 0.3 kB                   & 31.4 KB                   \\ \hline
      \multirow{3}{*}{AGS \cite{AGS11}}            & 151  & 2    & 302       & 0.2 kB                   & 29.3 kB                   \\ \cline{2-6} 
                                                   & 145  & 2    & 290       & 0.7 kB                   & 28.2 kB                   \\ \cline{2-6}
                                                   & 141  & 2    & 282       & 3.3 kB                   & 27.4 kB                   \\ \hline
      \multirow{2}{*}{GPS \cite{GPS21}}            & 512  & 128  & 65 536    & 0.2 kB                   & 27.1 kB                   \\ \cline{2-6}
                                                   & 4096 & 1024 & 4 194 304 & 0.2 kB                   & 19.8 kB                   \\ \hline
      \multirow{2}{*}{FJR \cite{FJR21}}            & 187  & 8    & 1496      & 0.1 kB                   & 24.4 kB                   \\ \cline{2-6}
                                                   & 389  & 32   & 12 448    & 0.1 kB                   & 17.6 kB                   \\ \hline
      \multirow{1}{*}{BGKM (Sig. 1) \cite{BGKM22}} & 256  & 2    & 512       & 0.1 kB                   & 24.3 kB                   \\ \hline
      \multirow{3}{*}{This Paper}                  & 151  & 2    & 302       & 0.1 kB                   & 24.1 kB                   \\ \cline{2-6} 
                                                   & 145  & 2    & 290       & 0.4 kB                   & 23.1 kB                   \\ \cline{2-6}
                                                   & 141  & 2    & 282       & 1.7 kB                   & 22.5 kB                   \\ \hline 
    \end{tabular}}}
  \end{center}
\vspace{-\baselineskip}
\end{table}

\begin{table}[!htb]
  \caption{Other code-based signatures ($\lambda = 128$)} \label{table:param3}
  \vspace{-0.5\baselineskip}
  \begin{center}
    {\scriptsize{\renewcommand{\arraystretch}{1.3}
    \begin{tabular}{|l|c|c|c|}
      \cline{2-4}
      \multicolumn{1}{c|}{} & $\pk$ & $\sigma$ & $\pk + \sigma$ \\ \hline 
      Wave \cite{wave19}       & 3.2 MB  & 0.93 kB  & 3.3 MB   \\ \hline
      LESS \cite{less}         & 11.6 kB & 10.4 kB  & 22.0 kB  \\ \hline
      Durandal \cite{Durandal} & 15.3 kB &  4.1 kB  & 19.4 kB  \\ \hline
    \end{tabular}}}
  \end{center}
\vspace{-0.5\baselineskip}
\end{table}

\vspace{0.5\baselineskip}
\noindent \textbf{Comparison with other schemes.} 
Outside of code-based cryptography, there exist many other signatures based on the Fiat-Shamir transform.
Some of them were submitted to the NIST standardization process \cite{picnic,MQDSS} while other have been published recently \cite{beullens2020}.
All these schemes reduce to a given difficult problem like for instance the MQ or PKP problems. 
For 128 bits of security, depending on these different post-quantum ZK signature schemes, the size of the signature may vary (also depending on chosen trade-offs) between 12kB for~\cite{beullens2020} and 40kB for MQDSS \cite{MQDSS}. 
A strong feature of the SD problem being that the problem has been
used for a long time, the attacks are well understood, and thus
significant speedups in the attacks are unlikely to occur.

\iflong
\section{Generalization to Additional Metrics} \label{sec:generalization}

The protocol described in this paper can be generalized to other type of weight through the use of Full Domain Linear Isometries such as Hamming weight (classical or large) over $\Fq$ or rank weight.
We define a Full Domain Linear Isometry (FDLI) as a set of linear isometries which has the property that given a random element $f$ of this set, the image by $f$ of a random word $\bm{x}$ of weight $w$ is a random word $\bm{y}$ of weight $w$.
Hereafter, we give some examples:

\begin{itemize}

  \item[$\bullet$] Hamming weight and words over $\Ftn$ (the case of this paper): consider the set of permutations $\hperm$ ;

  \item[$\bullet$] Hamming weight and words over $\Fq^n$: consider the set of monomial matrices (a permutation matrix times a diagonal matrix with non null elements of $\Fq$ on its diagonal). It can work for classical Hamming weight but also for more recent large weight \cite{bricout2019ternary} ;

  \item[$\bullet$] Rank weight and vectors over $\Fqmn$: consider the function (originally defined in \cite{RankStern11}) which associate to a word $\bm{x}$ of $\Fqmn$ a $m \times n$ matrix $\bm{X}$ by writing elements of $\Fqm$ as vectors of $\Fq^m$ through a basis $\beta$ of $\Fqm$ over $\Fq$. 
    Then multiply $\bm{X}$ by two random invertible matrices $\bm{P}$ and $\bm{Q}$ over $\Fq$ of respective sizes $m \times m$ and $n \times n$ such that $\bm{X} \rightarrow \bm{PXQ}$. 
    Finally, rewrite $\bm{PXQ}$ as an element of $\Fqmn$ through the basis $\beta$.
    $\bm{P}$ permits to associate a support (a vector space) of weight $w$ to any other support of weight $w$ and the matrix $\bm{Q}$ permits to associate words with the same support.  
\end{itemize}

When considering such FDLI functions the protocol described in this paper can directly be adapted for a given weight, simply by changing the weight and replacing the function $\pi$ in the protocol by a random element of a FDLI set.
The proof are straightforward and are the same than in the binary Hamming case described in this paper; we omit them in this short version of the paper. 
The notion of linearity and isometry permits to prove the correctness and the soundness of the adapted protocol while the full domain property permits to prove the zero-knowledge property.
Interestingly enough, there was an adaptation of the Stern protocol in \cite{Chen95} where the author had kept permutations but in a rank metric context.
In this case, the support of the secret vector was not modified and led to a (unseen at that time) break of zero-knowledge and ultimately a break of the whole protocol as described in \cite{RankStern11}.
\else

\vspace{0.5\baselineskip}
\noindent \textbf{Generalization to additional metrics.}
Our protocol can be generalized to other type of weight such as Hamming weight (classical or large) over $\Fq$ or rank weight as explained in the full version of the paper \cite{ISIT22-long}.
\fi

\newpage
\iflong
\else
\IEEEtriggeratref{18}
\fi
\bibliographystyle{splncs04}
\bibliography{ref}

\end{document}